\newcommand\Mark[1]{\textsuperscript{#1}}
\newcommand{\UCB}{\ensuremath{\mathrm{UCB}_1}}
\renewcommand{\Pr}{\mathbb{P}}
\begin{document}

% -----------------------------------------------------------------
\mainmatter  % start of an individual contribution

% first the title is needed
\title{Multi-Armed Bandit Learning in IoT Networks:\\
Learning helps even in non-stationary settings
}
% a short form should be given in case it is too long for the running head
\titlerunning{Multi-Armed Bandit Learning in Non-Stationary IoT Networks}

% the name(s) of the author(s) follow(s) next
\author{R\'emi Bonnefoi\Mark{1} \and Lilian Besson\Mark{1}\Mark{,2} \and\\
Christophe Moy\Mark{1} \and Emilie Kaufmann\Mark{2} \and Jacques Palicot\Mark{1}}  %
%\authorrunning{R\'emi Bonnefoi \emph{et al.}}
\authorrunning{R. Bonnefoi \and L. Besson \and C. Moy \and E. Kaufmann \and J. Palicot}

% the affiliations are given next
\institute{\Mark{1}CentraleSup\'elec (campus of Rennes), IETR, SCEE Team,\\
	               Avenue de la Boulaie - CS 47601, $35576$ Cesson-S\'evign\'e, France\\
%    \mailCS\\
%    \maila\\
%    \mailb\\
    \Mark{2}Univ. Lille 1, CNRS, Inria, SequeL Team\\
		    UMR 9189 - CRIStAL,  F-59000 Lille, France\\
%	\mailUL%
	\mailCSUL%
%    \mailc%
}

% \toctitle{MAB Algorithms applied to non-stochastic IoT networks}
% \tocauthor{R\'emi Bonnefoi \and Lilian Besson}

% -----------------------------------------------------------------
\maketitle

% -----------------------------------------------------------------
\begin{abstract}
Setting up the future Internet of Things (IoT) networks will require to support more and more communicating devices. We prove that intelligent devices in unlicensed bands can use Multi-Armed Bandit (MAB) learning algorithms to improve resource exploitation.
We evaluate the performance of two classical MAB learning algorithms, \UCB{} and Thomson Sampling, to handle the decentralized decision-making of Spectrum Access, applied to IoT networks; as well as learning performance with a growing number of intelligent end-devices.
We show that using learning algorithms does help to fit more devices in such networks, even when all end-devices are intelligent and are dynamically changing channel.
In the studied scenario, stochastic MAB learning provides a up to $16\%$ gain in term of successful transmission probabilities, and has near optimal performance even in non-stationary and non-\emph{i.i.d.} settings with a majority of intelligent devices.

%As far as we know, this is the first study to confirm robustness of the use of machine learning for decision making  for IoT Cognitive Radio networks.
%It is very encouraging to illustrate that using stochastic learning algorithm helps even for non-stochastic problems, here with a majority of dynamically reconfigurable devices.
%\todo[inline]{Use ``Multi-Armed Bandits'' directly instead of ``Reinforcement Learning'' ?}
\keywords{Internet of Things, Multi-Armed Bandits, Reinforcement Learning, Cognitive Radio, Non-Stationary Bandits.}
\end{abstract}

% -----------------------------------------------------------------
\iffalse
Outline:
Abstract -- DONE

1. Introduction: -- DONE
  IoT
  Unlicensed bands
  Multi-Armed Bandits, and algorithms
  Overview state of the art
  Contribution: non-stochastic => performance and limit of classical MAB algorithms
  Outline of the article

2. System model and notations - Lilian -- DONE
   blabla

3. Reference scenarios: - Lilian -- ALMOST DONE
   [x] random channel selection
   [x] naive optimal
   [x] and optimal

4. MAB RL algorithms - Lilian  -- DONE
   [x] Explain how they work, what they try: exploration-exploitation dilemma
   [x] UCB
   [x] Thompson Sampling
   [x] not so stochastic

5. Numerical results - Rémi -- ALMOST DONE
   [x] Explain scenarios
   [x] 4 curves: 10\%, 30\%, 50\%, 100\%
   [x] 1 common curve, gain against naive policy

6. Conclusion - Rémi
   [x] Sum up what we did
   [x] Future work?
   [x] Missing points

7. Acknowledgements
   [x] Rémi and Lilian
   [x] Grant for both of use
   [x] For Emilie as well?

8. Bibliography -- DONE

9. ? Appendix: one proof
   MAYBE HERE: Details for the use of Lagrange Multipliers (only in the full version for HAL?)
   NOT HERE: Understand the problem better with a linear approximation for the optimal solution (Taylor)
\fi

\section{Introduction}

Unlicensed bands are more and more used and considered for mobile and LAN communication standards (WiFi, LTE-U), and for Internet of Things (IoT) standards for short-range (ZigBee, Z-Wave, Bluetooth) and long-range (LoRaWAN, SIGFOX, Ingenu, Weightless) communications \cite{Centenaro16}.
This heavy use of unlicensed bands will cause performance drop, and could even compromise IoT promises.

Efficient Medium Access (MAC) policies allow devices to avoid interfering traffic and can significantly reduce the spectrum contention problem in unlicensed bands.
As end-devices battery life is a key constraint of IoT networks,
this leads to IoT protocols using as low signaling overhead as possible and simple ALOHA-based mechanisms.
In this article, we analyze the performance of Multi-Armed Bandits (MAB) algorithms \cite{LaiRobbins85,bubeck2012regret}, used in combination with a time-frequency slotted ALOHA-based protocol.
We consider the Upper-Confidence Bound (\UCB) \cite{Auer}, and the Thompson-Sampling (TS) algorithms \cite{Thompson33,AgrawalGoyal11,
Kaufmann12}.
% Both algorithms have already been applied with success, for Opportunistic Spectrum Access \cite{Jouini} and recently for multi-users Cognitive Radio problems \cite{Toldov}.

MAB learning has already been proposed in Cognitive Radio (CR) \cite{Haykin}, and in particular, for sensing-based Dynamic Spectrum Access (DSA) in licensed bands \cite{Jouini}.
%For example, \cite{Jouini} presents the OSA setting.% and \cite{Maghsudi} shows how MAB learning can be applied for small cell management in licensed 5G networks.
Recently, TS and \UCB{} algorithms have been used for improving the spectrum access in (unlicensed) WiFi networks \cite{Toldov}, and the \UCB{} algorithm was used in a unlicensed and frequency- and time-slotted IoT network \cite{Bonnefoi}.
Many recent works show that MAB algorithms work well for real-world radio signal.
However, even with only one dynamic user using the learning algorithm, the background traffic or the traffic of the other devices is never really stationary or \emph{i.i.d} (independent and identically distributed).
In recent works like \cite{Bonnefoi}, several devices are using bandit algorithms, and the assumptions made by the stochastic bandit algorithms are not satisfied: as several agents learn simultaneously, their behavior is neither stationary nor \emph{i.i.d}.
As far as we know, we provide the first study to confirm robustness of the use of stochastic bandit algorithms for decision making in IoT networks with a large number of intelligent devices in the network, which makes the environment not stationary at all, violating the hypothesis required for mathematical proofs of bandit algorithms convergence and efficiency.

The aim of this article is to assess the potential gain of learning algorithms in IoT scenarios, even when the number of intelligent devices in the network increases, and the stochastic hypothesis is more and more questionable.
To do that, we suppose an IoT network made of two types of devices: static devices that use only one channel (fixed in time), and dynamic devices that can choose the channel for each of their transmissions. Static devices form an interfering traffic, which could have been generated by devices using other standards as well.
We first evaluate the probability of collision if dynamic devices randomly select channels (naive approach), and if a centralized controller optimally distribute them in channels (ideal approach).
Then, these reference scenarios allow to evaluate the performance of \UCB{} and TS algorithms in a decentralized network, in terms of successful communication rate, as it reflects the network efficiency.
% We show that these algorithms have near-optimal performance, even when the proportion of end-devices increases and the interfering traffic from other devices becomes less and less stochastic.

The rest of this article is organized as follows. The system model is introduced in Section 2. Reference policies are described in Section 3, and MAB algorithms are introduced in Section 4. Numerical results are presented in Section 5.
% and Section 6 concludes this paper.
% The appendix presents a missing proof and some discussion.

\section{System model and notations}

\begin{figure}[!t]
\centering
\includegraphics[scale=0.28]{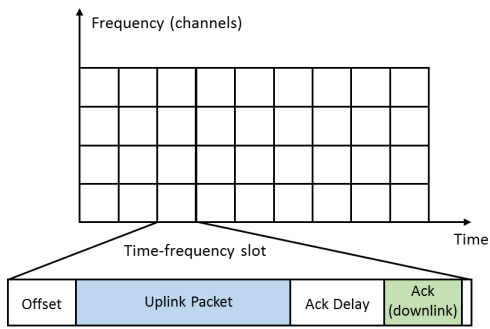}
\caption{The considered time-frequency slotted protocol. Each frame is composed by a fix-duration uplink slot in which the end-devices transmit their packets. If a packet is well received, the base station replies by transmitting an \emph{Ack}, after the ack delay.}
\label{fig:protocol}
\vspace*{-15pt}
\end{figure}

As illustrated in Figure \ref{fig:protocol}, we suppose a slotted protocol.
All devices share a synchronized time, and know in advance the finite number of available RF channels.
In each time slot, devices try to send packets to the unique Base Station, which listens continuously to all channels, following an ALOHA-based communication (no sensing).
Each time slot is divided in two parts: first for uplink communications in which data packets are sent by end-devices to the base station. If only one packet is sent in this part of the slot, the base station can decode it and sends an acknowledgement to the device in the second part.
If two or more devices send an uplink packet in the same slot, the uplink packets collide (\emph{i.e.}, there is a \emph{collision}), and the acknowledgement \emph{Ack} is not transmitted.
This way, no collision can occur on the downlink messages, easing the analysis of collisions.

There are two types of end-devices in the network:
\begin{itemize}
\item \emph{Static} end-devices have poor RF abilities, and each of them uses only one channel to communicate with the base station. Their choice is assumed to be fixed in time (stationary) and independent (\emph{i.i.d.}). The traffic generated by these devices is considered as an interfering traffic for other devices.
\item \emph{Dynamic} (or \emph{smart}) end-devices have richer RF abilities, they can use all the available channels, by quickly reconfiguring their RF transceiver on the fly. They can also store communication successes or failures they experienced in each channel, in order to change channel, possibly at every time slot.
% I think this sentence is not mandatory, actually, static devices can also perform not so simple mathematical operations e.g. Fourier transforms, matrix inversion, etc. This is done for digital signal processing.
%We also assume they possess an embedded micro-processor to perform simple mathematical operations, to compute their choice of channel.
\end{itemize}

There are $N_c \geq 1$ channels, $D \geq 0$ dynamic end-devices, and $S \geq 0$ static devices.
Furthermore, in channel $i \in \llbracket 1; N_c \rrbracket$ there are $0 \leq S_i \leq S$ static devices (so $S = \sum_{i=1}^{N_c} S_i$).
We focus on \emph{dense networks}, in which the number of devices $S + D$ is very large compared to $N_c$ (about $1000$ to $10000$, while $N_c$ is about $10$ to $50$).
As this problem is only interesting if devices are able to communicate reasonably efficiently with the base station, we assume devices only communicate occasionally, \emph{i.e.}, with a low \emph{duty cycle}, as it is always considered for IoT.
%, rather than non-crowded networks, i.e., where $S + D \leq N_c$, as the former makes more sense for IoT networks\footnote{$N_c$ will typically be about $20$ different radio channels, usually in the same RF band and with separate mean carrier frequencies, and the number of devices will be of the order of $1000$ to $10000$.}.

We suppose that all devices follow the same emission pattern, being fixed in time, and we choose to model it as a simple Bernoulli process:
all devices have the same probability to send a packet in any (discrete) temporal slot, and we denote $p \in (0, 1)$ this probability\footnote{In the experiments below, $p$ is about $10^{-3}$, because in a crowded network $p$ should be smaller than $N_c / (S + D)$ for all devices to communicate successfully (in average).}.
The parameter $p$ essentially controls the frequency of communication for each device, once the time scale is fixed (\emph{i.e.}, real time during two messages), and $1/p$ is proportional to the \emph{duty cycle}.
%For instance for IoT objects, a ``smart'' refrigerator could have to send a daily message, and if a message can be sent every second, then $p = 1 / (12 \times 60 \times 60) \simeq 1.5 \times 10^{-5}$.

The goal is to design a simple sequential algorithm, to be applied identically by each dynamic device, in a fully distributed setting (each device runs its own algorithm, from its observations), in order to minimize collisions and maximize the fraction of successful transmissions of all the dynamic devices.

Before explaining how this goal presents similarity with a \emph{multi-armed bandit problem}, we present some natural baseline policies (\emph{i.e.}, algorithms).

\section{Three reference policies}

This section presents three different policies that will be used to assess the efficiency of the learning algorithms presented in the next section.
The first one is naive but can be used in practice, while the two others are very efficient but require full knowledge on the system (i.e., an oracle) and are thus unpractical.

\subsection{Naive policy: Random Channel Selection}

We derive here the probability of having a successful transmission, for a dynamic device, in the case where all the dynamic devices make a purely random channel selection (i.e., uniform on $i \in \llbracket 1; N_c \rrbracket = \{1, \dots, N_c\}$).
% every time they communicate
% This reflects a very naive policy that could be implemented by all the dynamic devices, and it provides a reference scenario to compare against.

In this case, for one dynamic device, a successful transmission happens if it is the only device to choose channel $i$, at that time slot.
The probability of successful transmission is computed as follows, because the $S_i$ static devices in each channel $i$ are assumed to be independent, and static and dynamic devices are assumed to \emph{not} transmit at each time $t$ with a fixed probability $1-p$ :
\vspace*{-5pt}
\begin{equation}
\Pr(\text{success}|\text{sent}) = \sum_{i=1}^{N_c} \underbrace{\Pr(\text{success}|\text{sent in channel}\;i)}_{\text{No one else sent in channel}\; i} \; \underbrace{\Pr(\text{sent in channel}\,i)}_{= 1/N_c, \text{by uniform choice}} \end{equation}
All dynamic devices follow the same policy in this case, so the probability of transmitting at that time in channel $i$ for any dynamic device is $p / N_c$, and there are $D-1$ other dynamic devices.
As they are independent, the probability that no other dynamic device sent in $i$
% $\Pr(\text{no other dynamic device sent in}\;i)$,
is $q = \Pr(\bigcap_{k=1}^{D-1} \text{device}\;k\;\text{did not sent in}\;i) = \prod_{k=1}^{D-1} \Pr(\text{device}\;k\;\text{did not sent in}\;i)$. And $\Pr(\text{device}\;k\;\text{sent in}\;i) = p \times 1 / N_c$, by uniform choice on channels and the Bernoulli emission hypothesis. So $q = \prod_{k=1}^{D-1} (1 - p/N_c) = (1-p/N_c)^{D-1}$. Thus we can conclude,
\vspace*{-6pt}
% Shorter proof, by Rémi
\begin{align}\label{eq:strategynaive}
\Pr(\text{success}|\text{sent})
 & = \sum_{i=1}^{N_c} \underbrace{(1 - p / N_c)^{D-1}}_{\text{No other dynamic device}} \times \underbrace{(1-p)^{S_i}}_{\text{No static device}} \times\; \frac{1}{N_c} \nonumber \\
 & = \frac{1}{N_c} \left(1-\frac{p}{N_c}\right)^{D-1} \sum_{i=1}^{N_c} (1-p)^{S_i} .
\end{align}
This expression \eqref{eq:strategynaive} is constant (in time), and easy to compute numerically, but comparing the successful transmission rate of any policy against this naive policy is important, as any efficient learning algorithm should outperform it.

\subsection{(Unachievable) Optimal oracle policy}

We investigate in this section the optimal policy that can be achieved if the dynamic devices have a perfect knowledge of everything, and a fully centralized decision making\footnote{This optimal policy needs an \emph{oracle} seeing the entire system, and affecting all the dynamic devices, once and for all, in order to avoid any signaling overhead.} is possible.
We want to find the stationary repartition of devices into channels that maximizes the probability of having a successful transmission.

If the oracle draws once uniformly at random a configuration of dynamic devices, with $D_i$ devices affected to channel $i$ is fixed (in time, i.e., stationary),
then this probability is computed as before:
\vspace*{-5pt}
\begin{align}
\label{eq:prob_col}
\Pr(\text{success}|\text{sent})
& = \sum_{i=1}^{N_c} \Pr(\text{success}|\text{sent in channel}\;i) \; \Pr(\text{sent in channel}\;i) \nonumber \\
& = \sum_{i=1}^{N_c} \underbrace{(1 - p)^{D_i - 1}}_{\;\;D_i - 1 \;\text{others}\;\;} \times \underbrace{(1 - p)^{S_i}}_{\;\;\text{No static device}\;\;} \times \underbrace{ D_i / D }_{\;\;\text{Sent in channel}\; i\;\;}.
\end{align}

Consequently, the optimal allocation vector $(D_1,\dots,D_{N_c})$ is the solution of the following real-valued constraint optimization problem :
% Consequently, the problem of optimal devices allocation is this constrained real-valued optimization problem, up-to a multiplicative constant $D$ :
\vspace*{-5pt}
\begin{subequations}
\label{eq:prob}
\begin{align}
\underset{D_1,\dots,D_{N_c}}{\arg\max}\; & \sum_{i=1}^{N_c} D_i (1 - p)^{S_i + D_i -1}, \label{eq:optPb}\\
\text{such that}\;\; & \sum_{i=1}^{N_c} D_i = D, \label{eq:eqCstr}\\
& D_i \geq 0 \qquad \forall i\in\llbracket 1;N_c\rrbracket . \label{eq:ineqCstr}
\end{align}
\end{subequations}

\begin{proposition}\label{prop:Lagrangian}
The \emph{Lagrange multipliers} method \cite{BoydVanderberghe04} can be used to solve the constraint real-valued maximization problem introduced in equation \eqref{eq:prob}.

It gives a closed form expression for the optimal solution $D_i^*(\lambda)$, depending on the system parameters, and the unknown Lagrange multiplier $\lambda \in \mathbb{R}$. %(see the Appendix in the long version of this article, for more details).
% FIXME put the details in the appendix!
\begin{equation}\label{eq:Dilambda}
D_i^*(\lambda) = \left(\frac{1}{\log(1-p)}\left[ \mathcal{W}\left(\frac{\lambda e}{(1-p)^{S_i-1}} \right)-1 \right]\right)^{\dag} .
\end{equation}
\end{proposition}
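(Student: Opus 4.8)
The plan is to apply the Lagrange multiplier method to the equality-constrained problem \eqref{eq:prob}, and to handle the inequality constraints \eqref{eq:ineqCstr} separately at the end through the projection denoted $(\cdot)^\dagger$. First I would form the Lagrangian associated with the budget constraint \eqref{eq:eqCstr},
\begin{equation}
L(D_1,\dots,D_{N_c},\lambda) = \sum_{i=1}^{N_c} D_i (1-p)^{S_i + D_i - 1} - \lambda\Bigl(\sum_{i=1}^{N_c} D_i - D\Bigr),
\end{equation}
and write the stationarity conditions $\partial L/\partial D_i = 0$. Differentiating the product $D_i (1-p)^{S_i + D_i - 1}$ (recalling $\frac{d}{dx}(1-p)^x = (1-p)^x \log(1-p)$) yields the $N_c$ decoupled transcendental equations
\begin{equation}\label{eq:stationarity}
(1-p)^{S_i + D_i - 1}\bigl(1 + D_i \log(1-p)\bigr) = \lambda, \qquad \forall i\in\llbracket 1; N_c\rrbracket .
\end{equation}

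The core of the argument is to solve each equation of \eqref{eq:stationarity} for $D_i$ in closed form, and the key step is the substitution $u_i := 1 + D_i \log(1-p)$, i.e. $D_i = (u_i - 1)/\log(1-p)$. Since $(1-p)^{(u_i-1)/\log(1-p)} = e^{u_i - 1}$, the factor $(1-p)^{D_i}$ turns into $e^{u_i-1}$ and \eqref{eq:stationarity} collapses to $(1-p)^{S_i-1} e^{u_i-1} u_i = \lambda$, that is,
\begin{equation}
u_i\, e^{u_i} = \frac{\lambda e}{(1-p)^{S_i - 1}} .
\end{equation}
This is exactly the defining relation of the Lambert $\mathcal{W}$ function, the inverse of $w \mapsto w e^w$, so $u_i = \mathcal{W}\!\left(\lambda e/(1-p)^{S_i-1}\right)$. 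Substituting back $D_i = (u_i-1)/\log(1-p)$ reproduces the announced expression \eqref{eq:Dilambda}, prior to projection.

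Finally I would incorporate the inequality constraints \eqref{eq:ineqCstr}: whenever the unconstrained stationary value is negative, the KKT complementary-slackness conditions force the optimal allocation to saturate at $D_i^* = 0$, which is precisely the role of the positive-part projection $(\cdot)^\dagger = \max(\cdot,0)$ in \eqref{eq:Dilambda}. Two points I expect to be the main obstacle, and would treat carefully. First, one must select the correct real branch of $\mathcal{W}$: the argument has the sign of $\lambda$, and for the feasible regime (where $1 + D_i\log(1-p) > 0$, hence $\lambda > 0$) the principal branch $\mathcal{W}_0$ applies, which I would verify returns admissible non-negative values. Second, the multiplier $\lambda$ is only determined implicitly, by plugging \eqref{eq:Dilambda} into the budget $\sum_{i=1}^{N_c} D_i^*(\lambda) = D$ and solving the resulting scalar equation numerically, since no closed form for $\lambda$ exists. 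I would also note that confirming the stationary point is the global maximizer reduces to concavity of $D_i \mapsto D_i (1-p)^{S_i + D_i - 1}$, which a second-derivative computation shows holds as soon as $D_i < 2/\lvert\log(1-p)\rvert$, comfortably satisfied in the small-$p$, dense-network regime of interest.
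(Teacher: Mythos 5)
Your proposal is correct and follows essentially the same route as the paper: forming the Lagrangian for the budget constraint, cancelling its gradient, substituting $u_i = 1 + D_i\log(1-p)$ to reduce the stationarity condition to $u_i e^{u_i} = \lambda e/(1-p)^{S_i-1}$, inverting via the Lambert $\mathcal{W}$ function, and clipping at zero for the inequality constraints. Your concavity condition $D_i < 2/|\log(1-p)|$ also matches the paper's justification that strong duality applies in the regime of interest.
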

\begin{proof}
\begin{itemize}
    \item
    In a realistic scenario, we can assume that $D_i\leq \frac{-2}{\ln\left(1-p\right)} \approx \frac{2}{p},\quad \forall i\in\llbracket 1;N_c \rrbracket$. For such values for $D_i$, the objective function $f: (D_1, \dots, D_{N_c}) \mapsto \sum_{i=1}^{N_c} D_i (1 - p)^{S_i + D_i -1}$ is concave as the sum of concave functions
    \footnote{It worth noting that $f$ is neither concave nor quasi-concave on $[0,\infty)^{N_c}$ \cite{Luenberger68,Yaari77}.}.
        \item
The Lagrange multipliers method can be applied to the optimization problem \eqref{eq:optPb}, with a concave objective function $f$, linear equality constraints \eqref{eq:eqCstr} and linear inequality constraints \eqref{eq:ineqCstr}. The strong duality condition is satisfied in this case \cite{BoydVanderberghe04}, so finding the saddle points will be enough to find the maximizers.
    \hfill{}$\square$
    \end{itemize}
\end{proof}

Where $(a)^{\dag} = \max(a,0)$, and $\mathcal{W}$ denotes the $\mathcal{W}$-Lambert function which is the reciprocal bijection of $x \mapsto xe^x$ on $\mathbb{R^+} = [0, +\infty)$ \cite{Corless96}.
Moreover, condition \eqref{eq:eqCstr} implies that the Lagrange multiplier $\lambda$ is the solution of the constraint
\begin{equation}\label{eq:constraintLambda}
 \sum_{i=1}^{N_c} D_i^*(\lambda) = D.
\end{equation}

Equation \eqref{eq:constraintLambda} can be solved numerically, with simple one-dimensional root finding algorithms.
Solving the optimization problem provides the optimal real number value for $D_i^*$, which has to be rounded to find the optimal number of devices for channel $i$ :
% Any rounding choice will give about the same repartition, up-to a difference of only one device by channel, and so we chose to round from below for the first channels:
$\widehat{D_i} = \lfloor D_i^* \rfloor$ for $1 \leq i < N_c$, and $\widehat{D_{N_c}} = D - \sum_{i=1}^{N_c - 1} \widehat{D_i}$.

%\vspace*{-5pt}
%\subsection{An oracle greedy and sequential approximation}
\subsection{A greedy approach of the oracle strategy}

We propose a \emph{sequential} approximation of the optimal policy:
the third solution is a sub-optimal naive policy, simple to set up, but also unpractical as it also needs an oracle.
End-devices are iteratively inserted in the channels with the lowest load (\emph{i.e.}, the index $i$ minimizing $S_i + D_i(\tau)$ at global time step $\tau$). Once the number of devices in each channel is computed, the probability of sending successfully a message is also given by equation \eqref{eq:prob_col}.
This is the policy that would be used by dynamic devices if they were inserted one after the other, and if they had a perfect knowledge of the channel loads.

\vspace*{-5pt}
\section{Sequential policies based on bandit algorithms}

We now present the stochastic Multi-Armed Bandit (MAB) model,
and the two stochastic MAB algorithms used in our experiments \cite{bubeck2012regret}.
While the stochastic MAB model has been used to describe some aspects of Cognitive Radio systems, it is in principle not suitable for our IoT model, due to the non-stationarity of the channels occupancy caused by the learning policy used by dynamic objects.
%Quite surprisingly, we will show in our experiments of Section~\ref{sec:Experiments} that two algorithms inspired by the stochastic bandit modeling, UCB1 and Thompson Sampling, perform well even for coordinating multiple smart devices.

%This section presents the formalism of the Multi-Armed Bandit (MAB) problem, as introduced by \cite{LaiRobbins85}, and two classical algorithms designed to tackle its simplest version (stochastic MAB), \UCB{} from \cite{Auer} and Thompson Sampling from \cite{Thompson33,Kaufmann12}.
%For more details, \cite{bubeck2012regret} is an excellent review.

\subsection{Stochastic Multi-Armed Bandits}

A Multi-Armed Bandit problem is defined as follows \cite{Thompson33,Robbins52,LaiRobbins85}.
There is a fixed number $N_c \geq 1$ of levers, or ``arms'', and a player has to choose one lever at each discrete time $t \geq 1, t \in \mathbb{N}$, denoted as $A(t) = k \in\{1,\dots,N_c\}$.
Selecting arm $k$ at time $t$ yields a (random) \emph{reward}, $r_k(t) \in \mathbb{R}$, and the goal of the player is to maximize the sum of his rewards, $r_{1 \dots T} = \sum_{t = 1}^T r_{A(t)}(t)$.

A well-studied version of this problem is the so-called ``stochastic'' MAB, where the sequence of rewards drawn from a given arm $k$ is assumed to be independent and identically distributed (\emph{i.i.d}) under some distribution $\nu_k$, that has a mean $\mu_k$. Several types of reward distributions have been considered, for example distributions that belong to a one-dimensional exponential family (\emph{e.g.}, Gaussian, Exponential, Poisson or Bernoulli distributions).
We consider Bernoulli bandit models, in which $r_k(t) \sim \mathrm{Bern}(\mu_k)$, that is, $r_k(t) \in \{0,1\}$ and $\mathbb{P}(r_k(t) = 1) = \mu_k$.

The problem parameters $\mu_1,\dots,\mu_K$ are unknown to the player, so to maximize his cumulated rewards, he must learn the distributions of the channels, to be able to progressively focus on the best arm (\emph{i.e.}, the arm with largest mean).
This requires to tackle the so-called \emph{exploration-exploitation dilemma}: a player has to try all arms a sufficient number of times to get a robust estimate of their qualities, while not selecting the worst arms too many times.

In a Cognitive Radio application, arms model the \emph{channels}, and players are the \emph{dynamic end-devices}.
For example in the classical OSA setting with sensing \cite{Jouini}, a single dynamic device (a player) sequentially tries to access channels (the arms), and collects a reward of 1 if the channel is available %(as there is no primary user, we can assume the transmission to be successful),
and 0 otherwise. So rewards represent the \emph{availability} of channels, and
the parameter $\mu_k$ represents the mean availability of channel $k$.

Before discussing the relevance of a multi-armed bandit model for our IoT application, we present two bandit algorithms, UCB1 and Thompson Sampling,
which both strongly rely on the assumption that rewards are \emph{i.i.d.}.
%\todo[inline]{Insist somewhat on the fact that both UCB and TS rely on the i.i.d. assumption}

\subsection{The \UCB{} algorithm}

A naive approach could be to use an empirical mean estimator of the rewards for each channel, and select the channel with highest estimated mean at each time. %hoping it is the optimal channel.
%(or not a too sub-optimal channel).
This greedy approach is known to fail dramatically \cite{LaiRobbins85}. Indeed, with this policy, the selection of arms is highly dependent on the first draws, if the first transmission in a channel fails, the device will never use it again.
Rather than relying on the empirical mean reward, Upper Confidence Bounds algorithms
instead use a \emph{confidence interval} on the unknown mean $\mu_k$ of each arm,
which can be viewed as adding a ``bonus'' exploration to the empirical mean.
% by adding a confidence term to the empirical mean.
They follow the ``\emph{optimism-in-face-of-uncertainty}'' principle : at each step, they play according to the best model,
as the statistically best possible arm (\emph{i.e.}, the highest upper confidence bound) is selected.
% the arm with the highest upper confidence bound is expected to be the true best arm, and is played.

More formally, for one device, let $N_k(t) = \sum_{\tau=1}^t \mathbbm{1}(A(\tau) = k)$ be the number of times channel $k$ was selected up-to time $t \geq 1$.
The empirical mean estimator of channel $k$ is defined as the mean reward obtained by selecting it up to time $t$, $\widehat{\mu_k}(t) = 1 / N_k(t) \sum_{\tau=1}^t r_k(\tau) \mathbbm{1}(A(\tau) = k) $.
For \UCB, the \emph{confidence} term is $B_k(t) = \sqrt{\alpha \log(t) / N_k(t)}$,
giving the upper confidence bound $U_k(t) = \widehat{\mu_k}(t) + B_k(t)$, which is used by the device to decide the channel for communicating at time step $t+1$: $A(t+1) = \arg\max_{1\leq k \leq N_c} U_k(t)$.
\UCB{} is an \emph{index policy}.

The \UCB{} algorithm uses a parameter $\alpha > 0$, originally, $\alpha$ set to $2$ \cite{Auer}, but empirically $\alpha = 1/2$ is known to work better (uniformly across problems), and $\alpha > 1/2$ is advised by the theory \cite{bubeck2012regret}.
% This algorithm is simple to implement and use in practice, even on embedded micro-processors with limited computation and memory capabilities.
%
% \todo[inline]{Maybe in the implementation of UCB the $t$ $\log(t)$ should be the total amount of time the user tried to communicate? It is already the case, but we should say so.}
In our model, every dynamic device implements its own \UCB{} algorithm, \emph{independently}. For one device, the time $t$ is the number of time it accessed the network (following its Bernoulli transmission process,\emph{i.e.}, its duty cycle), \emph{not} the total number of time slots from the beginning, as rewards are only obtained after a transmission, and IoT objects only transmit sporadicly, due to low transmission duty cycles.

\subsection{Thompson Sampling}

%It was the first stochastic MAB algorithm, historically introduced by Thomspon in $1933$, not for CR but for clinical trials: there is $N_c$ different medicines, and every day a doctor must choose one medicine to deliver to a new patient, and the trial is a success if the medicine helped the patient.
%Thomson Sampling is a simple Bayesian policy: a Bayesian prior is assumed on the mean, and its conjugate posterior is updated based on the successful trials for each treatment.
Thompson Sampling \cite{Thompson33} was introduced in 1933 as the very first bandit algorithm, in the context of clinical trials (in which each arm models the efficacy of one treatment across patients). Given a prior distribution on the mean of each arm, the algorithm selects the next arm to draw based on samples from the \emph{conjugated} posterior distribution, which for Bernoulli rewards is a Beta distribution.

A Beta prior $\mathrm{Beta}(a_k(0)=1,b_k(0)=1)$ (initially uniform) is assumed on $\mu_k \in [0, 1]$, and at time $t$ the posterior is $\mathrm{Beta}(a_k(t),b_k(t))$.
After every channel selection, the posterior is updated to have $a_k(t)$ and $b_k(t)$ counting the number of successful and failed transmissions made on channel $k$.
So if the \emph{Ack} message is received, $a_k(t+1) = a_k(t) + 1$, and $b_k(t+1) = b_k(t)$, otherwise $a_k(t+1) = a_k(t)$, and $b_k(t+1) = b_k(t) + 1$.
Then, the decision is done by \emph{sampling} an \emph{index} for each arm, at each time step $t$, from the arm posteriors: $X_k(t) \sim \mathrm{Beta}(a_k(t), b_k(t))$, and the chosen channel is simply the channel $A(t+1)$ with highest index $X_k(t)$. For this reason, Thompson Sampling is a \emph{randomized index policy}.

Thompson Sampling, although being very simple, is known to perform well for stochastic problems, for which it was proven to be asymptotically optimal \cite{AgrawalGoyal11,Kaufmann12}.
It is known to be empirically efficient, and for these reasons it has been used successfully in various applications, including on problems from Cognitive Radio \cite{Toldov,Mitton}, and also in previous work on decentralized IoT-like networks \cite{Darak16}.

\subsection{A bandit model for IoT}

Our IoT application is challenging in that there are \emph{multiple} players (the dynamic devices) interacting with the \emph{same} arms (the channels), without any centralized communication (they do not even know the total number of dynamic devices).

Considered alone, each dynamic device implements a learning algorithm to play a bandit game, the device is consequently a smart device. In each time slot, if it has to communicate (which happens with probability $p$), then it chooses a channel and it receives a reward $1$ if the transmission is successful, $0$ otherwise.
Each device aims at maximizing the sum of the rewards collected during its communication instants, which shall indeed maximize the fraction of successful transmissions. Besides the modified time scale (rewards are no longer collected at every time step), this looks like a bandit problem.
However, it cannot be modeled as a stochastic MAB, as the rewards are clearly \emph{not} \emph{i.i.d}: they not only depend on the (stationary, \emph{i.i.d}) behavior of the static devices, but also on the behavior of other smart devices, that is not stationary (because of learning).

Despite this, we show in the next section that running a stochastic bandit algorithm for each device based on its own rewards is surprisingly successful.

\subsubsection{Multi-Player MAB with \emph{collision avoidance}?}
Another idea could be to try to use a \emph{multi-player MAB} model, as proposed by \cite{Zhao10}, to describe our problem.

In that case, the static and dynamic devices effect is decoupled, and arms only model the availability of the channels in the absence of dynamic devices : they are \emph{i.i.d.} with mean $\mu_i = 1 - p S_i$.
Moreover, dynamic devices are assumed to be able to \emph{sense} a channel before sending \cite{Zhao10}, and so communicate only if no static device is detected on the channel.
The smart devices try to learn the arms with highest means, while coordinating to choose different arms, \emph{i.e.}, avoid collisions in their choice, in a decentralized manner.
However, in this model it is assumed that the multiple agents can know that they experienced a collision with another agent, which is non-realistic for our problem at stake, as our model of smart device cannot do sensing nor differentiate collisions between smart and non-smart devices.

%The stochastic MAB model fits well for OSA for Cognitive Radio,
%but for IoT networks, a notable difference with the MAB problem introduced above is that there is not only one device adaptively using the network, but there are $D \geq 1$ dynamic end-devices (and possibly only dynamic devices).
%So in our model of IoT network, the stochastic \emph{iid} hypothesis for the rewards makes less sense: the (stationary) background traffic interferes with the end-devices transmissions but also the end-devices interfere with each others. To the stochastic static traffic is added a ``game-theoretical'' interaction between dynamic devices (who, let us remind it, cannot communicate with each others).
%Intuitively, the more dynamic devices, the less stationary the network will be.
%The presence of a lot of dynamic devices implies that each should learn both the (stationary) distribution of static devices (i.e., the $\mu_k$), and learn to avoid collisions with other devices.

\subsubsection{\emph{Adversarial} bandit algorithms?}
% \todo[inline]{Explain that we also tried adversarial algorithms ($\mathrm{Exp}3$), and it was working but less efficiently than \UCB{} and Thompson Sampling. Maybe add it on some curve?}
Instead of using MAB algorithms assuming a stochastic hypothesis on the system, we could try to use MAB algorithms designed to tackle a more general problem, that makes no hypothesis on the interfering traffic.
The \emph{adversarial MAB} algorithms is a broader family, and a well-known and efficient example is the $\mathrm{Exp}3$ algorithm \cite{bubeck2012regret}.
Empirically, the $\mathrm{Exp}3$ algorithm turned out to perform worse than both \UCB{} and TS in the same experiments.
%(see Section \ref{sec:Experiments}).
Contrarily to the two stochastic algorithms, the use of $\mathrm{Exp}3$ is correctly justified, even in the non-stationary and non-\emph{i.i.d}, as its performance guarantee are true in \emph{any} setting.
But it is not so surprising that it performs worse, as the theoretical performance guarantees of adversarial MAB algorithms are an order of magnitude
%\footnote{We preferred not to talk about \emph{regret} in this study, but in a few words: it is a measure of how \emph{bad} the algorithm performed in terms of its accumulated rewards, in comparison to the best possible policy, and should be as small as possible. For stochastic algorithms, being ``efficient'' means having a regret bounded as $R_T = \mathcal{O}(\log T)$, but for adversarial algorithms, it means having $R_T = \mathcal{O}(\sqrt{K T})$.}
worse than the one for stochastic ones.
% (in their respective case of application).
More is left on this aspect for our future work.

% Not displayed
\iffalse
We should instead write $r_k(t) \sim \mathrm{Bern}(\mu_k(t))$, if $\mu_k(t)$ represents the probability that no static and no other dynamic device chose to send a packet on channel $k$, \emph{at that time $t$}.
\fi

\section{Experiments and numerical results}\label{sec:Experiments}

\begin{figure*}[!t]
\centering
\subfloat[10\% of smart devices]{\includegraphics[scale=0.4]{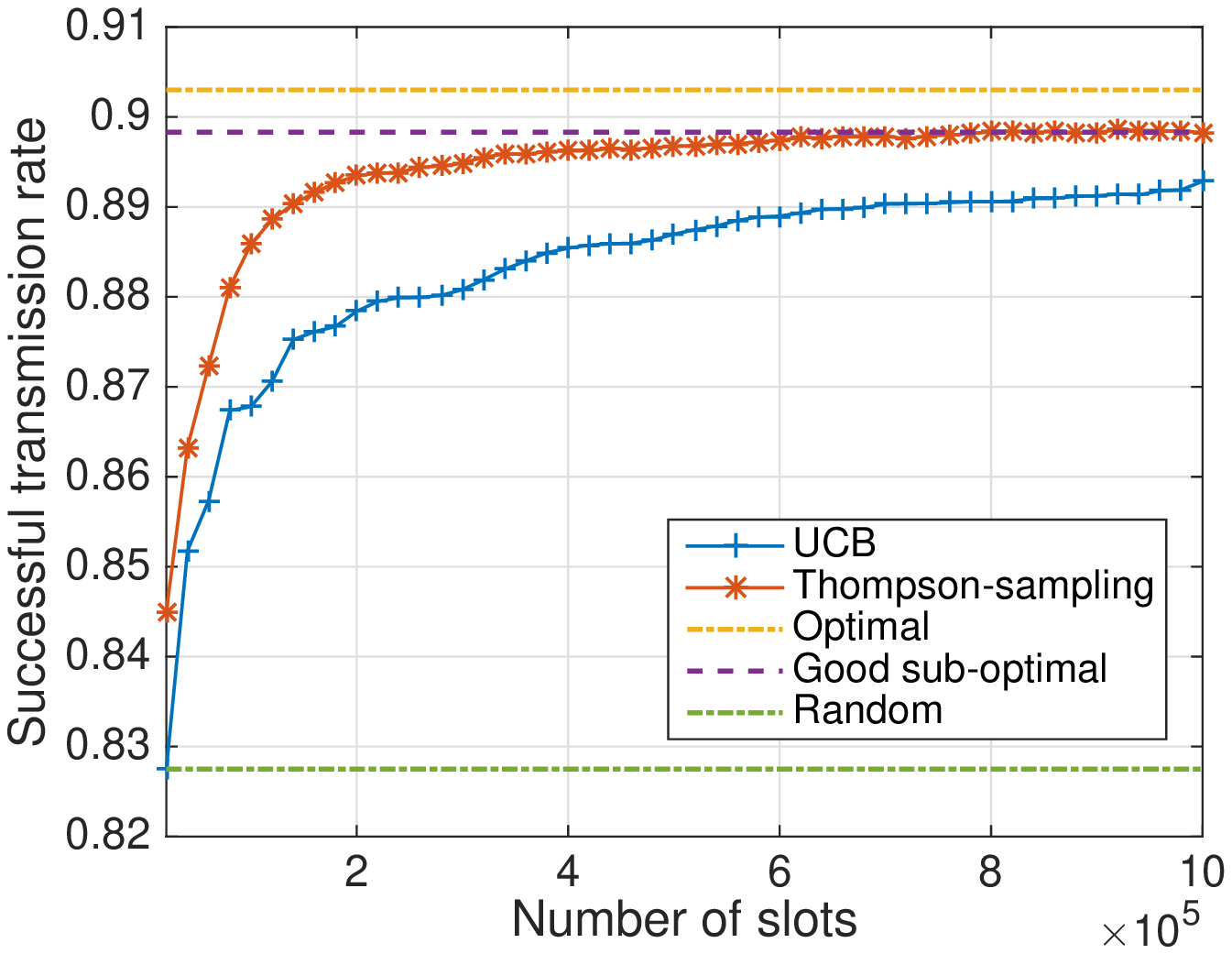}
\label{fig:10intelligent}}
\hfill
\subfloat[30\% of smart devices]{\includegraphics[scale=0.4]{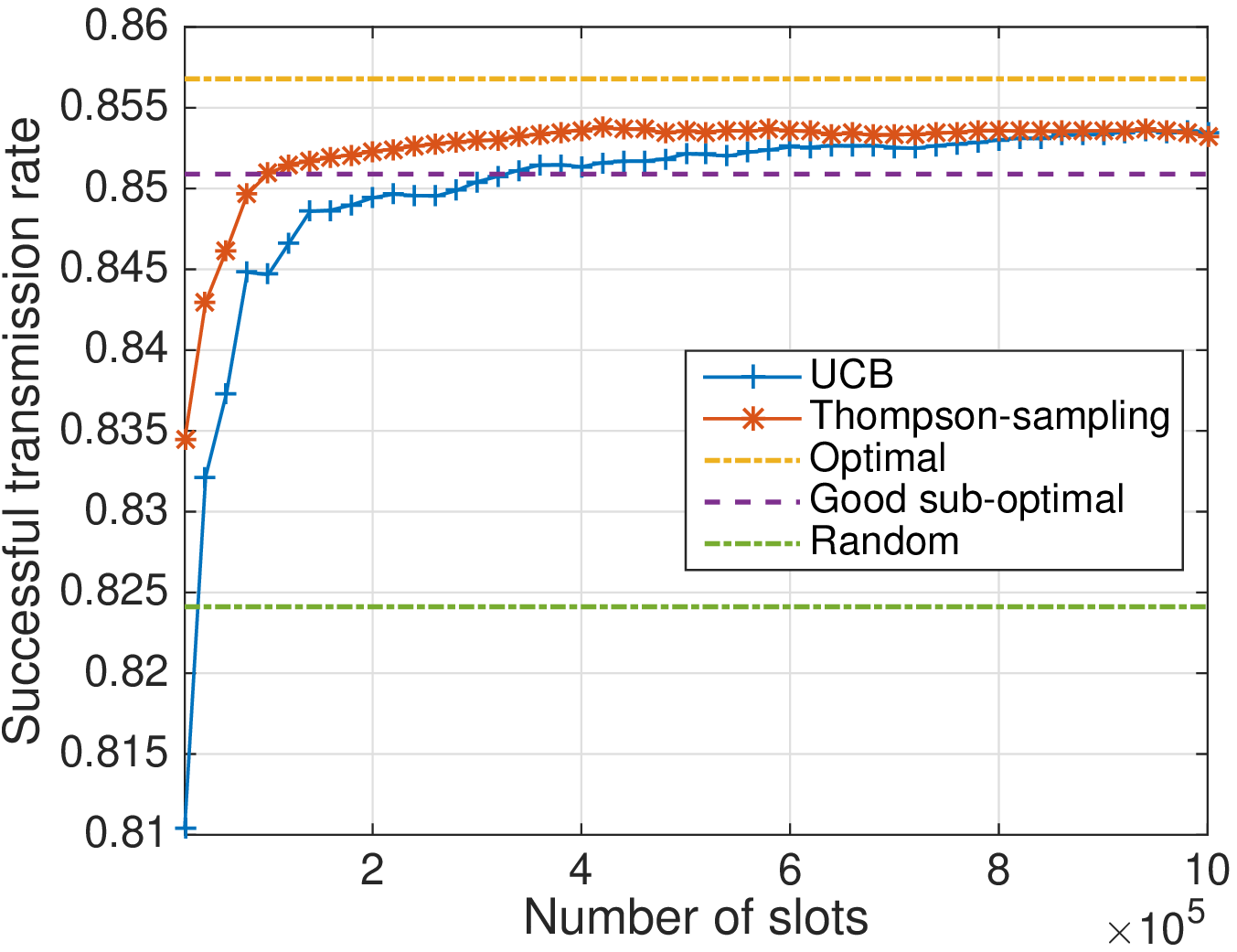}
\label{fig:30intelligent}}
\vskip\baselineskip
\vspace*{-20pt}
\subfloat[50\% of smart devices]{\includegraphics[scale=0.4]{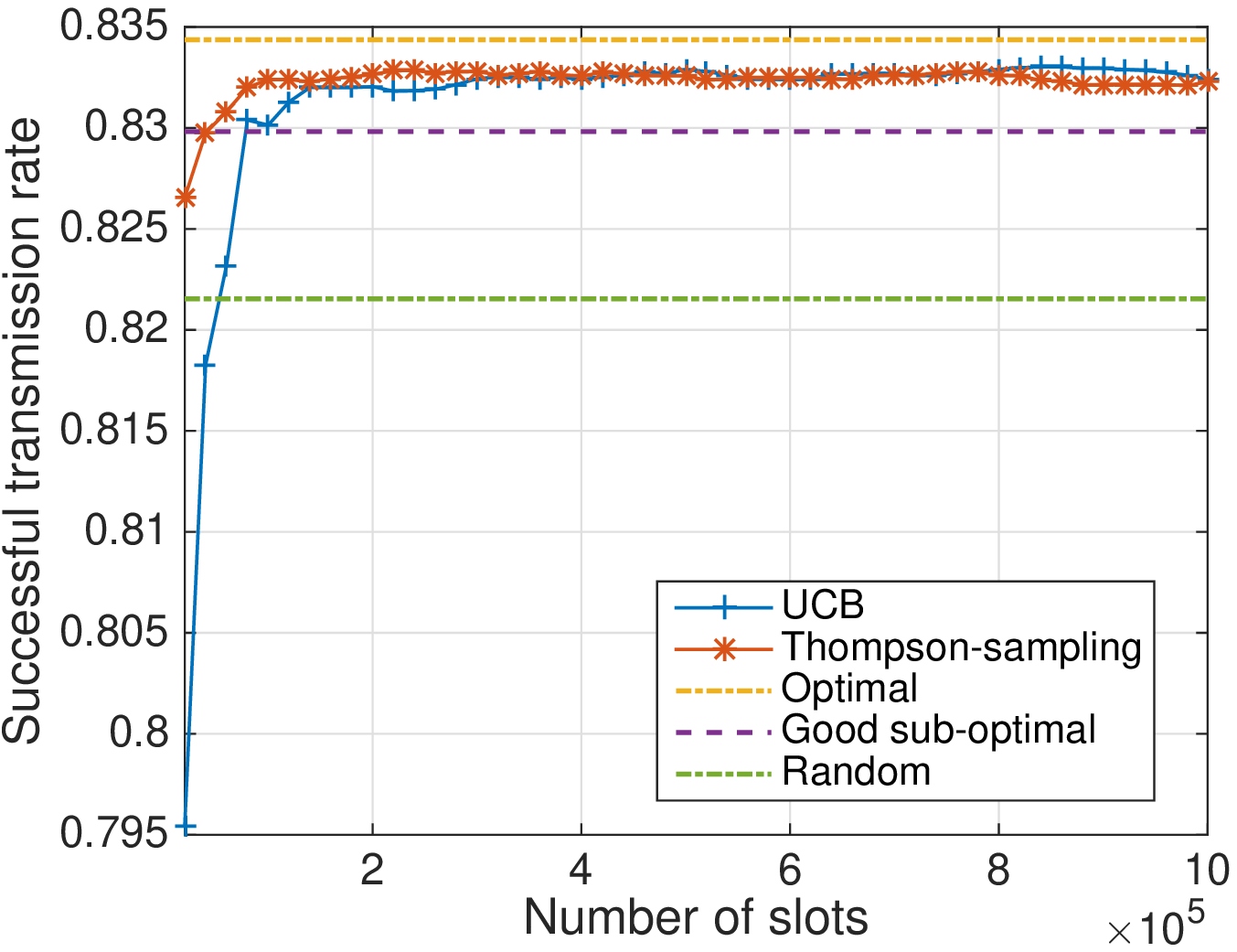}
\label{fig:50intelligent}}
\hfill
\subfloat[100\% of smart devices]{\includegraphics[scale=0.4]{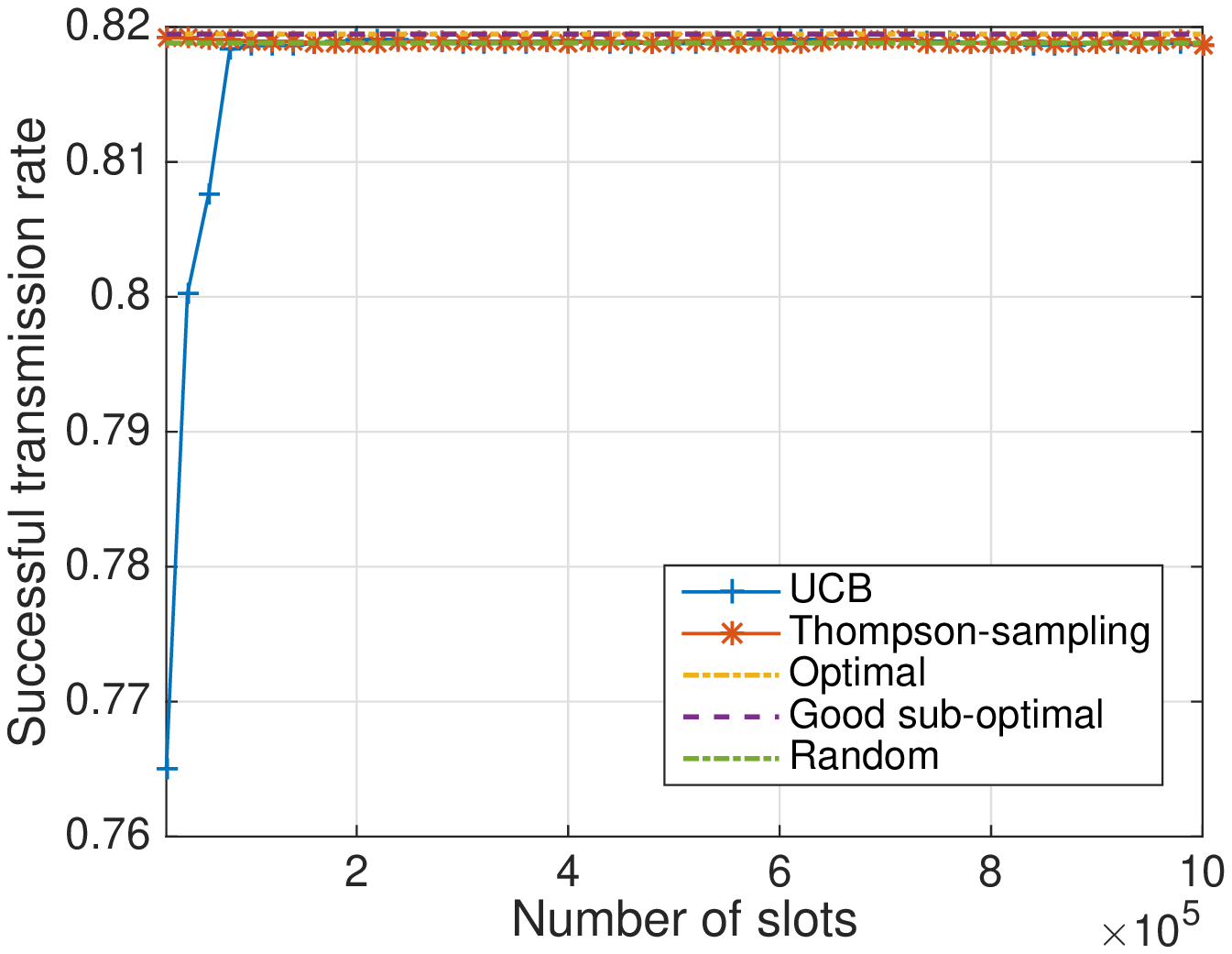}
\label{fig:100intelligent}}
\caption{Performance of $2$ MAB algorithms (\UCB{} and Thompson Sampling), compared to extreme references without learning or oracle knowledge, when the proportion of smart end-devices in the network increases, from $10\%$ to $100\%$ (limit scenario).}
\label{fig:from10to100}
\vspace*{-10pt}
\end{figure*}

% Simulation parameters
We suppose a network with $S + D = 2000$ end-devices, and one IoT base station.
Each device sends packets following a Bernoulli process, of probability $p = 10^{-3}$ (\emph{e.g.}, this is realistic: one packet sent about every $20$ minutes, for time slots of $1\mathrm{s}$).
The RF band is divided in $N_c = 10$ channels\footnotemark.
Each static device only uses one channel, and their uneven repartition\footnotemark[\value{footnote}] in the $10$ channels is: $(S_1,\cdots, S_{N_c}) = S\times(0.3, \, 0.2, \, 0.1, \, 0.1, \, 0.05, \, 0.05, \, 0.02, \, 0.08, \, 0.01,$ $0.09)$, to keep the same proportions when $S$ decreases. The dynamic devices have access to all the channels, and use learning algorithms.
%to find the least loaded.
%
We simulate the network during $10^6$ discrete time slots, during which each device transmits on average $1000$ packets (i.e., the learning time is about $1000$ steps, for each algorithm).
\footnotetext{We tried similar experiments with other values for $N_c$ and this repartition vector, and results were similar for non-homogeneous repartitions. Clearly, the problem is less interesting for homogeneous repartition, as all channels appear the same for dynamic devices, and so even with $D$ small in comparison to $S$, the system behaves like in Fig.\ref{fig:100intelligent}, where the performance of the five approaches are very close.}

% Present the first curve
Figure \ref{fig:from10to100} presents the evolution of the successful transmission rate, as a function of time. %($x$ axis is labeled by hundreds of transmissions). % I will change the x axis
The two MAB algorithms, \UCB{} and Thompson Sampling (TS), are compared against the naive random policy from below, and the two oracle policies (optimal and greedy) from above.
The results are displayed when $10$, $30$, $50$ and $100\%$ of the traffic is generated by dynamic devices.

We can see in Figure \ref{fig:from10to100} that the TS algorithm (in red) outperforms the \UCB{} algorithm (in blue), when the number of end-devices is below 50\%. When the number of end-devices is higher, both algorithms have almost the same performance, and perform well after very few transmissions (quick convergence).
Moreover, we can see in Figures \ref{fig:10intelligent}, \ref{fig:30intelligent}, and \ref{fig:50intelligent} that both have better success rate than the random policy and the probability of successful transmission is between the oracle optimal and oracle suboptimal policies.
For instance, for $10\%$ of dynamic devices, after about $1000$ transmissions, using \UCB{} over the naive uniform policy improved the successful transmission rate from $83\%$ to $88\%$, and using Thompson Sampling improved it to $89\%$.
Increasing the number of end-devices decreases the gap between the optimal and random policies: the more dynamic devices, the less useful are learning algorithms, and basically for networks with only dynamic devices, the random policy is as efficient as the optimal one, as seen in Figures \ref{fig:100intelligent} and \ref{fig:perf_learning}.

To better assess the evolution of the optimal policy compared to the random one, we have displayed on Figure \ref{fig:perf_learning} the evolution of the gain, in term of successful transmissions rate, provided by the optimal oracle and the two learning policies, after $10^6$ time slots, \emph{i.e.}, about $1000$ transmissions for each object.
We can see that when the proportion of end-devices is low (\emph{e.g.}, $1\%$ of devices are dynamic), the optimal policy provides an improvement of $16\%$ compared to random channel selection.
The TS algorithm always provides near-optimal performance, but the \UCB{} algorithm has a lowest rate of convergence and performs consequently worse after $1000$ transmissions, for instance it only provides a gain of $12\%$ for the same proportion of dynamic devices ($1\%$).
\vspace*{-3pt}

\begin{figure}[!t]
\centering
\includegraphics[scale=0.41]{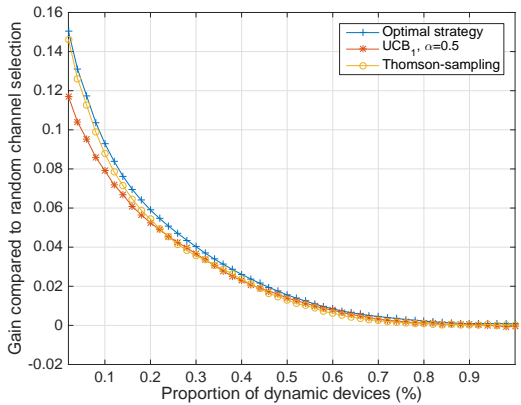}
\caption{Learning with \UCB{} and Thomson Sampling, with many smart devices.}
\label{fig:perf_learning}
\vspace*{-5pt}
\end{figure}
% We should insist on that part
Figure \ref{fig:perf_learning} also shows that learning keeps near-optimal performance even when the proportion of devices becomes large.
Note that when this proportion increases, the assumptions of a stochastic MAB model are clearly violated, and there is no justification for the efficiency of TS and \UCB{} algorithms.
Hence, it is surprising to have near optimal performance with stochastic MAB algorithms applied to partly dynamic and fully dynamic scenarios.

\vspace*{-10pt}
\section{Conclusion}

In this article, we proposed an evaluation of the performance of MAB learning algorithms in IoT networks,
%We first studied three reference scenarios, two oracle ones and a random one, and we used them to evaluate the learning performance.
with a focus on the convergence of algorithms, in terms of successful transmission rates, when the proportion of intelligent dynamic devices changes.
Concretely, increasing this probability allows to insert more objects in the same network, while maintaining a good Quality of Service.
We show that \UCB{} and TS have near-optimal performance, even when their underlying \emph{i.i.d.} assumption is violated by the many ``intelligent'' end-devices.

This is both a surprising and a very encouraging result, showing that application of bandit algorithms tailored for a stochastic model is still useful in broader settings.
The fully \emph{decentralized} application of classic stochastic MAB algorithms are almost as efficient as the best possible centralized policy in this setting, after a short learning period, even though the dynamic devices \emph{can not} communicate with each others, and \emph{do not} know the system parameters.
We will investigate this behavior in order to understand it better theoretically.
We will also experiment more with adversarial algorithms, to confirm that they work less efficiently than stochastic bandit algorithms in our non-stochastic setting.

Moreover, for sake of simplicity we supposed that all devices use the same standard. Our future work will consider more realistic interference scenarios and IoT networks, with, \emph{e.g.}, non-slotted time, more than one base station etc.

\begin{small}
\subsubsection{Acknowledgements}
This work is supported by the French National Research Agency (ANR), under the projects SOGREEN (grant coded: \texttt{N ANR-14-CE28-0025-02}) and BADASS (\texttt{N ANR-16-CE40-0002}), by R\'egion Bretagne, France, by the French Ministry of Higher Education and Research (MENESR) and ENS Paris-Saclay.\newline

\emph{Note}: the simulation code used for the experiments in Section \ref{sec:Experiments} is for MATLAB or GNU Octave,
% https://www.gnu.org/software/octave/
and is open-sourced under the MIT License,\newline at:
\verb|https://Bitbucket.org/scee_ietr/rl_slotted_iot_networks|.
\end{small}
\vspace{-10pt}
% ---------- Bibliography ----------
%\begin{small}
% Which style to use? I prefer alpha
\bibliographystyle{ieeetr}
\bibliography{biblio_RIoT}

\begin{thebibliography}{10}

\bibitem{Centenaro16}
M.~Centenaro, L.~Vangelista, A.~Zanella, and M.~Zorzi, ``Long-range
  communications in unlicensed bands: the rising stars in the {IoT} and smart
  city scenarios,'' {\em IEEE Wireless Communications}, vol.~23, no.~5,
  pp.~60--67, 2016.

\bibitem{LaiRobbins85}
T.~L. Lai and H.~Robbins, ``Asymptotically efficient adaptive allocation
  rules,'' {\em Advances in Applied Mathematics}, vol.~6, no.~1, pp.~4--22,
  1985.

\bibitem{bubeck2012regret}
S.~Bubeck, N.~Cesa-Bianchi, {\em et~al.}, ``{Regret analysis of Stochastic and
  Non-Stochastic Multi-Armed Bandit Problems},'' {\em Foundations and
  Trends{\textregistered} in Machine Learning}, vol.~5, no.~1, pp.~1--122,
  2012.

\bibitem{Auer}
P.~Auer, N.~Cesa-Bianchi, and P.~Fischer, ``{Finite-time Analysis of the
  Multi-armed Bandit Problem},'' {\em Machine Learning}, vol.~47, no.~2,
  pp.~235--256, 2002.

\bibitem{Thompson33}
W.~R. Thompson, ``On the likelihood that one unknown probability exceeds
  another in view of the evidence of two samples,'' {\em Biometrika}, vol.~25,
  1933.

\bibitem{AgrawalGoyal11}
S.~Agrawal and N.~Goyal, ``{Analysis of Thompson sampling for the Multi-Armed
  Bandit problem},'' in {\em JMLR, Conference On Learning Theory}, pp.~39--1,
  2012.

\bibitem{Kaufmann12}
E.~Kaufmann, N.~Korda, and R.~Munos, {\em Thompson Sampling: an Asymptotically
  Optimal Finite-Time Analysis}, pp.~199--213.
\newblock Springer, Berlin Heidelberg, 2012.

\bibitem{Haykin}
S.~Haykin, ``{Cognitive Radio: Brain-Empowered Wireless Communications},'' {\em
  IEEE Journal on Selected Areas in Communications}, vol.~23, no.~2,
  pp.~201--220, 2005.

\bibitem{Jouini}
W.~Jouini, D.~Ernst, C.~Moy, and J.~Palicot, ``{Upper Confidence Bound Based
  Decision Making Strategies and Dynamic Spectrum Access},'' in {\em 2010 IEEE
  International Conference on Communications}, pp.~1--5, 2010.

\bibitem{Toldov}
V.~Toldov, L.~Clavier, V.~Loscrí, and N.~Mitton, ``{A Thompson Sampling
  approach to channel exploration-exploitation problem in multihop cognitive
  radio networks},'' in {\em PIMRC}, pp.~1--6, 2016.

\bibitem{Bonnefoi}
R.~Bonnefoi, C.~Moy, and J.~Palicot, ``{Advanced metering infrastructure
  backhaul reliability improvement with Cogn.Radio},'' in {\em SmartGridComm},
  pp.~230--236, 2016.

\bibitem{BoydVanderberghe04}
S.~Boyd and L.~Vandenberghe, {\em {Convex Optimization}}.
\newblock {Cambridge Univ. Press}, 2004.

\bibitem{Luenberger68}
D.~G. Luenberger, ``{Quasi-convex programming},'' {\em SIAM Journal on Applied
  Mathematics}, vol.~16, no.~5, pp.~1090--1095, 1968.

\bibitem{Yaari77}
M.~E. Yaari, ``A note on separability and quasiconcavity,'' {\em Econometrica},
  vol.~45, no.~5, pp.~1183--1186, 1977.

\bibitem{Corless96}
R.~Corless, G.~Gonnet, D.~Hare, D.~Jeffrey, and D.~Knuth, ``{On the Lambert
  $\mathcal{W}$ Function},'' in {\em Advances in Computational Mathematics},
  pp.~329--359, 1996.

\bibitem{Robbins52}
H.~Robbins, ``{Some aspects of the sequential design of experiments},'' {\em
  Bulletin of the American Mathematical Society}, vol.~58, no.~5, pp.~527--535,
  1952.

\bibitem{Mitton}
A.~Maskooki, V.~Toldov, L.~Clavier, V.~Loscr{\'i}, and N.~Mitton,
  ``{Competition: Channel Exploration/Exploitation Based on a Thompson Sampling
  Approach in a Radio Cognitive Environment},'' in {\em {EWSN}}, 2016.

\bibitem{Darak16}
C.~Moy, J.~Palicot, and S.~J. Darak, ``{Proof-of-Concept System for
  Opportunistic Spectrum Access in Multi-user Decentralized Networks},'' {\em
  EAI Endorsed Transactions on Cognitive Communications}, vol.~2, 2016.

\bibitem{Zhao10}
K.~Liu and Q.~Zhao, ``{Distributed learning in Multi-Armed Bandit with multiple
  players},'' {\em IEEE Trans. on Signal Processing}, vol.~58, no.~11,
  pp.~5667--5681, 2010.

\end{thebibliography}

\end{document}